\newcommand{\nop}[1]{}
\newtheorem{definition}{Definition}
\newtheorem{lemma}{Lemma}
\begin{document}
%
\title{From Landscape to Portrait: A New Approach for Outlier Detection in Load Curve Data}
%
%
%

\author{Guoming~Tang, Kui~Wu,~\IEEEmembership{Senior Member,~IEEE}, Jingsheng Lei, Zhongqin Bi, and Jiuyang Tang
}
%
%

\markboth{ieee transactions on smart grid,~Vol.~XX, No.~XX, XX~2013}{Tang \MakeLowercase{\textit{et al.}}: From Landscape to Portrait: A New Approach for Outlier Detection in Load Curve Data}
%



\maketitle

\begin{abstract}
In power systems, load curve data is one of the most important datasets that are collected and retained by utilities. The quality of load curve data, however, is hard to guarantee since the data is subject to communication losses, meter malfunctions, and many other impacts. In this paper, a new approach to analyzing load curve data is presented. The method adopts a new view, termed \textit{portrait}, on the load curve data by analyzing the periodic patterns in the data and re-organizing the data for ease of analysis. Furthermore, we introduce algorithms to build the virtual portrait load curve data, and demonstrate its application on load curve data cleansing. Compared to existing regression-based methods, our method is much faster and more accurate for both small-scale and large-scale real-world datasets.
\end{abstract}

\begin{IEEEkeywords}
Load Curve Data Cleansing, Pattern Analysis.\newline\newline
\end{IEEEkeywords}

%
\IEEEpeerreviewmaketitle
\vspace{-0.2in}
\section*{Nomenclature}

\textbf{Related to Portrait and Landscape Data}
\addcontentsline{toc}{section}{Nomenclature}
\begin{IEEEdescription}[\IEEEusemathlabelsep\IEEEsetlabelwidth{$out(\Theta,)$}]
\item[$t_i$] The $i$-th timestamp
\item[$y(t)$] The load curve data at time $t$
\item[$p_i$] The $i$-th basic portrait dataset (BPD)
\item[$P_i$] The $i$-th virtual portrait dataset (VPD)
\item[$l_i$] Landscape data in the $i$-th period
\item[$L_i$] The $i$-th virtual landscape data (VLD)
\item[${sp}_i$] Span of the $i$-th BPD or VPD
\item[$e_i$] Characteristic vector of the $i$-th BPD or VPD
\item[$\theta_i$] Median value of the $i$-th BPD or VPD
\item[$M_i$] Median absolute deviation (MAD) of the $i$-th BPD or VPD
\item[${s}_{ij}$] Similarity between the $i$-th and the $j$-th portrait datasets
\item[$d_{n}$] The mean distance of $n$ virtual portrait or landscape datasets\newline
\end{IEEEdescription}

\textbf{Related to Outlier Detection}
\addcontentsline{toc}{section}{Nomenclature}
\begin{IEEEdescription}[\IEEEusemathlabelsep\IEEEsetlabelwidth{$out(\Theta,)$}]
\item[$out(\alpha,\Theta)$] The outlier region of a certain distribution with parameter vector $\Theta$, with confidence coefficient or significance index $\alpha$
\item[$N(\mu,\sigma^{2})$] Normal distribution with mean $\mu$ and variance $\sigma^{2}$
\item[$G(\beta,\gamma)$] Gamma distribution with shape parameter $\beta$ and scale parameter $\gamma$
\item[$Q_{1},Q_{3}$] The lower quartile and upper quartile of a boxplot, respectively
\item[$IQR$] The interquartile range of a boxplot (\textit{i.e.}, $Q_{3}-Q_{1}$)
\item[$df$] The degree of freedom in B-spline smoothing\newline\newline
\end{IEEEdescription}

\section{Introduction}

\IEEEPARstart{I}{n} recent smart grid research~\cite{farhangi2010path, chen2009survey, chen2010automated}, load curve data, which refers to electric energy consumption data collected and retained by utilities, has become one of the most important datasets for a broad spectrum of applications. For electric utilities, the analysis of load curve data plays a significant role in day-to-day operations, system reliability, and energy planning. For the energy consumers, load curve data provides them with abundant information on their daily and seasonal energy cost, helping them make timely response to save expense. Overall, the importance of load curve data in the demand side management (DSM) of smart grid makes it the critical information in modern electric industry.

Due to the critical meaning of load curve data, its quality is of vital importance. Nevertheless, load curve data is subject to pollution caused by many factors, such as communication failures, meter malfunctions, unexpected interruption or shutdown of power stations, unscheduled maintenance, and temporary closure of production lines. In this paper, we call load curve data \textit{polluted} when it significantly deviates from its regular patterns or when some data items are missing. Due to its huge volume, it would be nearly impossible to manually identify the polluted load curve data. Clearly, an efficient, automatic method is needed to solve the {\em load curve data cleansing} problem, \textit{i.e.}, to detect and fix polluted load curve data.

\subsection{Motivation}

We have observed that all existing work arranges load curve data in chronological order, \textit{i.e.}, the load curve data is strictly treated as a time series. As shown in Fig.~\ref{fig_timelinePlot}, the hourly energy consumption of over one hundred residential houses was recorded for one year ($8760$ hours) and displayed in the 2D Coordinate System, with $x$-axis representing the time and $y$-axis the load values ($kWh$). In addition to aggregated data of multiple households, the low-level energy consumption of an individual household is illustrated in Fig.~\ref{fig_timelinePlot2}, which shows the similar periodic pattern\footnote{Note that some low-level household load curve data may not have periodic pattern at all. In this case, the method developed in this paper does not bring benefit.}. We call such type of arrangement of load data as {\em landscape data}. Landscape data is easy to understand, but it poses several barriers to efficient analysis. 

\nop{
\begin{figure}[!ht]
\begin{center}
\includegraphics[width=4in]{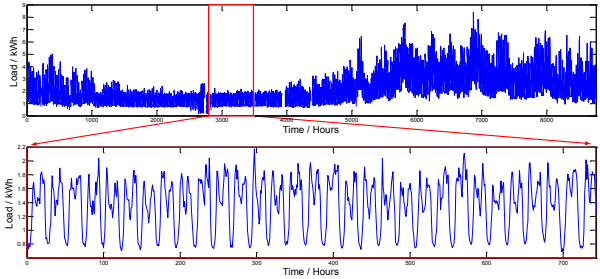}
\caption{Average energy consumption of $112$ residential houses in U.S. for one year from $01/04/2006$ to $31/03/2007$ (above) and data for one month from $01/08/2006$ to $31/08/2006$ (below), provided by Pacific Northwest National Laboratory~\cite{hammerstrom2007pacific}.}\label{fig_timelinePlot}
\end{center}
\vspace{-0.1in}
\end{figure} 
}

\begin{figure*}
\begin{center}
\begin{minipage}{3.2in}
\includegraphics[width=3.2in]{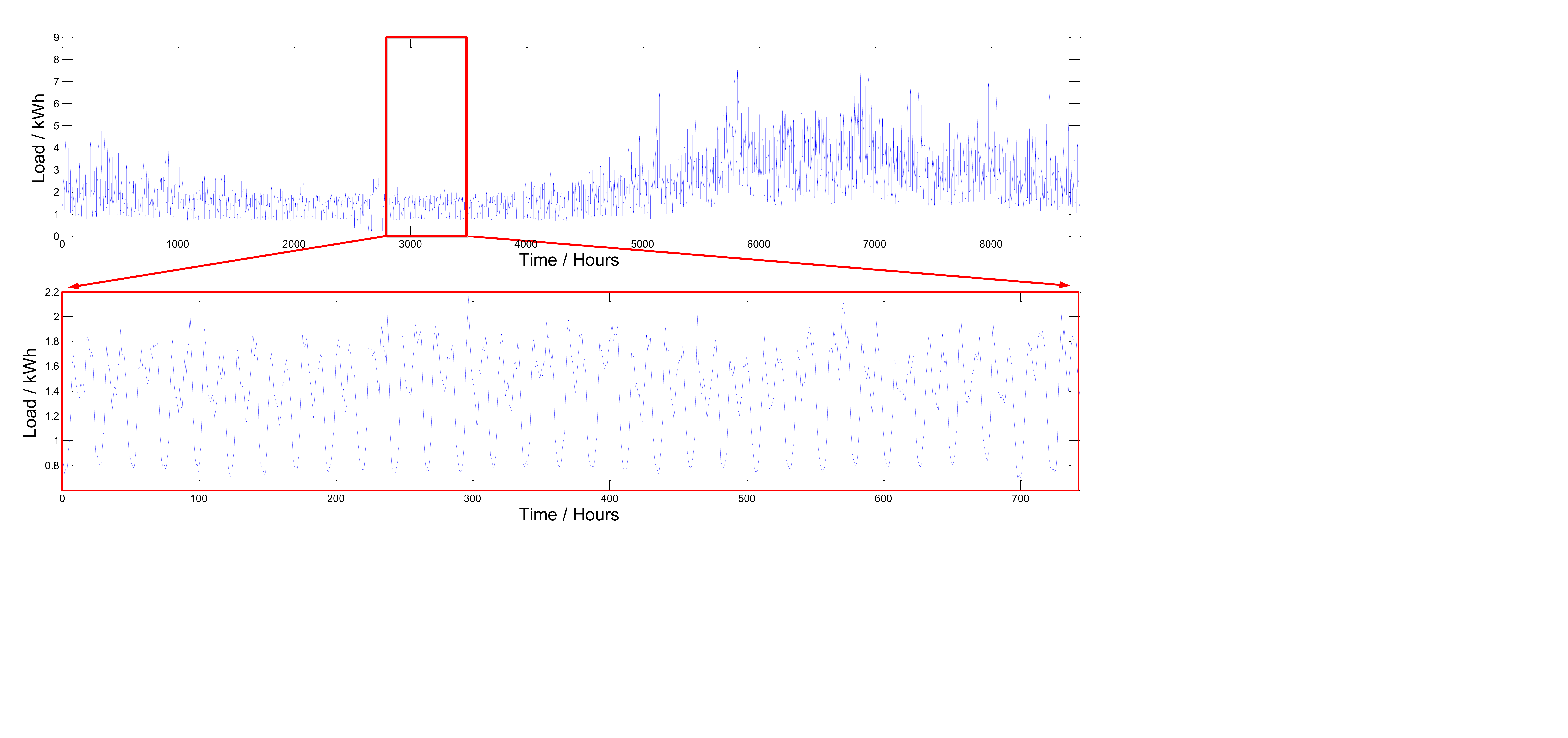}
\caption{Average energy consumption of $112$ residential houses in US for one year from $01/04/2006$ to $31/03/2007$ (above) and data for one month from $01/08/2006$ to $31/08/2006$ (below), provided by Pacific Northwest National Laboratory~\cite{hammerstrom2007pacific}.}\label{fig_timelinePlot}
\vspace{-3.5mm}
\end{minipage}
\hspace{5ex}
\begin{minipage}{3.2in}
\includegraphics[width=3.2in,height = 1.55in]{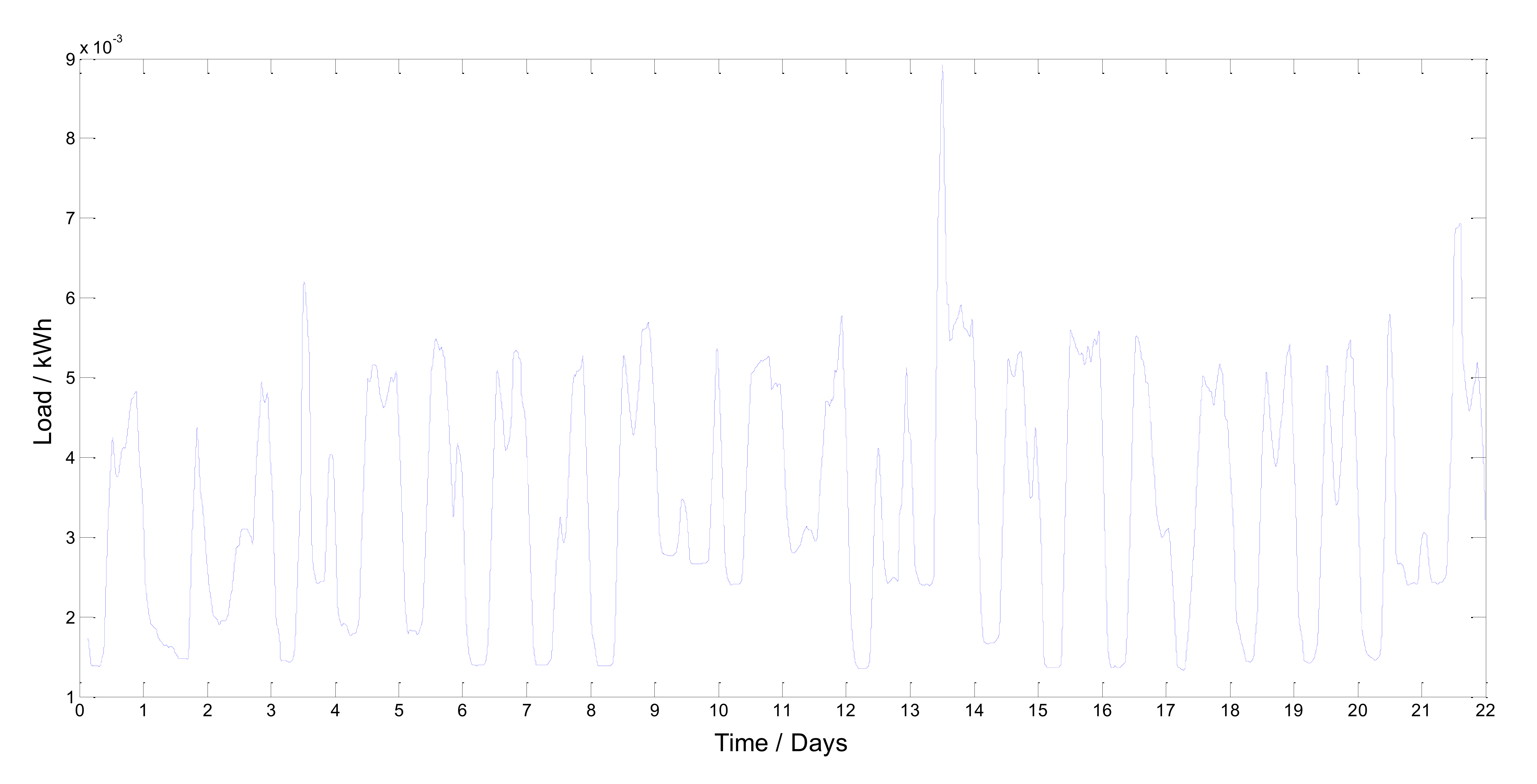}
\caption{Daily energy consumption of an individual residential house in Waterloo, Canada, from $23/01/2011$ to $13/02/2011$, provided by Singh \textit{et al.}, University of Waterloo~\cite{singh2012hourly}.}\label{fig_timelinePlot2}
\end{minipage}
\end{center}
\end{figure*}

\begin{itemize}
\item First, in a short time window (say $1$ to $2$ hours), the correlation between time and the load values may be hard to capture due to two reasons: (1) some random events may play a dominant role in electric load, and (2) it is hard to obtain a unified model to capture the local pattern, which may change over time. 
\item Second, in a relatively long time (say days), even though certain regular patterns of the load curve can be found, the load curve along the timeline is nonlinear and may be too complicated to model with fixed parameters.
\item Third, with landscape data, each sample is usually treated equally, making it difficult to effectively capture special behavioral features. For instance, the energy consumption for a cafeteria is low and stable when it is closed and high during breakfast and lunch times. In this sense, it would be better if load data could be treated differently during the former period (say from 7:00 pm to 7:00 am) and during the latter periods (say from 7:00 am to 9:00 am and from 11:00 am to 1:00 pm).
\end{itemize}

\subsection{Our Contribution}
Based on the above observations, we challenge the traditional landscape data as an efficient way to organize load curve data. The following contributions are made in the paper: 
\begin{itemize}
\item A new view, called {\em portrait}, is proposed for load data analysis. Switching perspective from landscape to portrait, some hidden behavioral patterns in the load data become prominent, such as the numerical stability of load curve data in the same hours of different days.
\item With Fourier analysis, an algorithm is designed to {\em automatically} transform a landscape data to portrait data. We further extend the method to build {\em virtual portrait datasets}, meaning of which will be disclosed later, to address the problem in the third observation raised above.
\item A data pre-processing method is proposed, so that non-stationary load data can be effectively handled with the help of virtual portrait datasets.
\item Efficient algorithms are designed to use virtual portrait data for both small-scale and large-scale load data cleansing. Our experimental results show that our portrait based method is faster and more accurate, compared to the state-of-the-art regression-based methods. 
\end{itemize}

\nop{The rest of the paper is organized as follows. In Section~\ref{sec:relatedwork}, we introduce the related work. In Section~\ref{sec:systemModel}, some concepts about portrait data and its benefits are introduced. We apply Fourier transform to obtain the period of landscape data automatically, and propose algorithms to build (virtual) portrait data from landscape data in Section~\ref{sec:bvPortrait}. In Section~\ref{sec:cleansing}, we design algorithms for load data cleansing, based on virtual portrait data. We further study non-stationary landscape data in Section~\ref{sec:bvLandscape}. We evaluate the performance of our method with small-scale stationary and large-scale non-stationary load data in Section~\ref{sec:evaluation}. The paper is concluded in Section~\ref{sec:conclusion}.}

\section{Related Work}\label{sec:relatedwork}

Load curve data cleansing in smart grid has caught more and more attention recently, from both academia and industry. So far, most related work considers the polluted data as outliers in load pattern and focuses on outlier detection.

Regression-based methods have been widely studied for outlier detection in time series~\cite{chen2010automated, mateos2012robust, ljung1993outlier, abraham1989outlier}. In~\cite{chen2010automated}, a non-parametric regression method based on B-spline and kernel smoothing was proposed and applied to identify polluted data. In~\cite{abraham1989outlier}, the residual pattern from regression models was analyzed and applied to construct outlier indicators, and a four-step procedure for modeling time series in the presence of outliers was also proposed. Greta \textit{et al.}~\cite{ljung1993outlier} considered the estimation and detection of outliers in time series generated by a Gaussian auto-regression moving average (ARMA) process, and showed that the estimation of additive outliers was related to the estimation of missing observations. The ARMA model was also utilized in~\cite{fox1972outliers, abraham1988score, abraham1989outlier, schmid1986multiple} as the basic model for outlier detection. In general, the regression-based methods are established on empirical knowledge and their parameters are regulated manually according to the domain knowledge of experts. As a result, such methods are subject to either underestimation or overestimation.

Since load curve data consists of one-dimensional real values, univariate statistical methods can deal with outliers in such dataset~\cite{ferguson1961rejection, david1979robust, gather1989testing, davies1993identification}. Most univariate methods for outlier detection assume that the data values follow an underlying known distribution. Then, the outlier detection problem is transformed to the problem of finding the observations that lie in a so-called outlier region of the assumed distribution~\cite{davies1993identification}. Even though those methods have been proved simple and effective, we may not always know the underlying distribution of the data. This is unfortunately true for load curve data, \textit{e.g.}, the distribution of the data shown in Figs.~\ref{fig_timelinePlot} and~\ref{fig_timelinePlot2} is unknown.

In addition to the above methods, data mining techniques have also been developed to detect outliers, such as $k$-nearest neighbor~\cite{ramaswamy2000efficient, knox1998algorithms}, $k$-means~\cite{allan1998topic, nairac1999system}, $k$-medoids~\cite{bolton2001unsupervised}, density-based clustering~\cite{kriegel2005density}, \textit{etc.}  In general, these methods classify the observations with similar features, and find the observations that do not belong strongly to any cluster or far from other clusters. Nevertheless, most data mining techniques are designed for structured relational data, which may not align well for the need of outlier detection in load curve data. In addition, these methods are normally time consuming because they need a training process on a large dataset.

\nop{Some of the above methods, especially the regression-based methods, have also been used for load forecasting. Nevertheless, load forecasting and load data cleansing are different applications, and their purposes are different. For load forecasting, as mentioned in~\cite{chen2010automated}, all historical data are trusted and used to forecast the load at a future point in time. For load cleansing, historical records are used to detect corrupted data at a historical point in time, and appropriate values may be needed to replace the corrupted data. Therefore, load cleansing and load forecasting belong to different phases of load analysis: load cleansing comes before load forecasting and provides accurate load information for the latter.}

\section{Introduction of Portrait Data}\label{sec:systemModel}

\nop{We propose a new view of load curve data and organize them via a model of portrait data, which can facilitate the analysis and cleansing of load curve data.}

\subsection{Portrait Data}

\begin{definition}
Consider a periodic function $f(x)$ with period of $T$ defined over $[0, NT]$. We split one period of time $[0, T]$ into $n$ even slices, i.e., $0=x_0 < x_1 < x_2 \ldots < x_n=T$. The \textbf{portrait data} of function $f(x)$ corresponding to the $i$-th time slice ($0 \leq i \le n$), denoted by $p_i$, is defined as the dataset: 
\begin{equation}
p_i := \{ f(x) |x\in \left[x_i+kT, x_{i+1}+kT \right], 0\leq k \leq N\}.
\end{equation}
\end{definition}

\begin{definition} The \textbf{span} of a portrait data $p_i$ is defined as
\begin{equation}
{sp}_i := x_{i+1} - x_i.
\end{equation}
\end{definition}

Similarly, for discrete periodic load curve data with even spacing labeled as $\{y(0),y(1),y(2),\cdots\}$, the portrait data are composed with the data points falling within the corresponding time intervals, \textit{i.e.}, the portrait data $p_i$ is constructed as: 

\begin{equation}\label{Eqt:portraitDef}
p_i := \{ y(t)| t =  t_i+kT, 0\leq k \leq N\}.
\end{equation}

\subsection{Example of Portrait Data }

To help better understand the portrait data, we use the one-month load curve data in Fig.~\ref{fig_timelinePlot} as an example to illustrate portrait data visually.

Noticing that the data exhibits a periodicity of $24$ hours, we divide the original time line by $24$ hours into $31$ slices (days) and re-arrange the slices in parallel. In this way, we transform the 2D landscape data into 3D space, with $x$-axis representing hours, $y$-axis days, and $z$-axis the load values, as shown in Fig.~\ref{fig_portraintPlot1}. To view the energy consumption of each hour in the $31$ slices, we rotate the figure in the $x\mbox{-}y$ coordinate by $90$ degrees, and re-draw the data into $24$ slices. Each slice represents a portrait data consisting of the energy consumption at the same hour in the $31$ days, as shown in Fig.~\ref{fig_portraintPlot2}. Immediately, we can observe that: \emph{the values in each portrait dataset are relatively stable.}

\begin{figure}[!ht]
\begin{center}
\includegraphics[width=2.5in,height=1.4in]{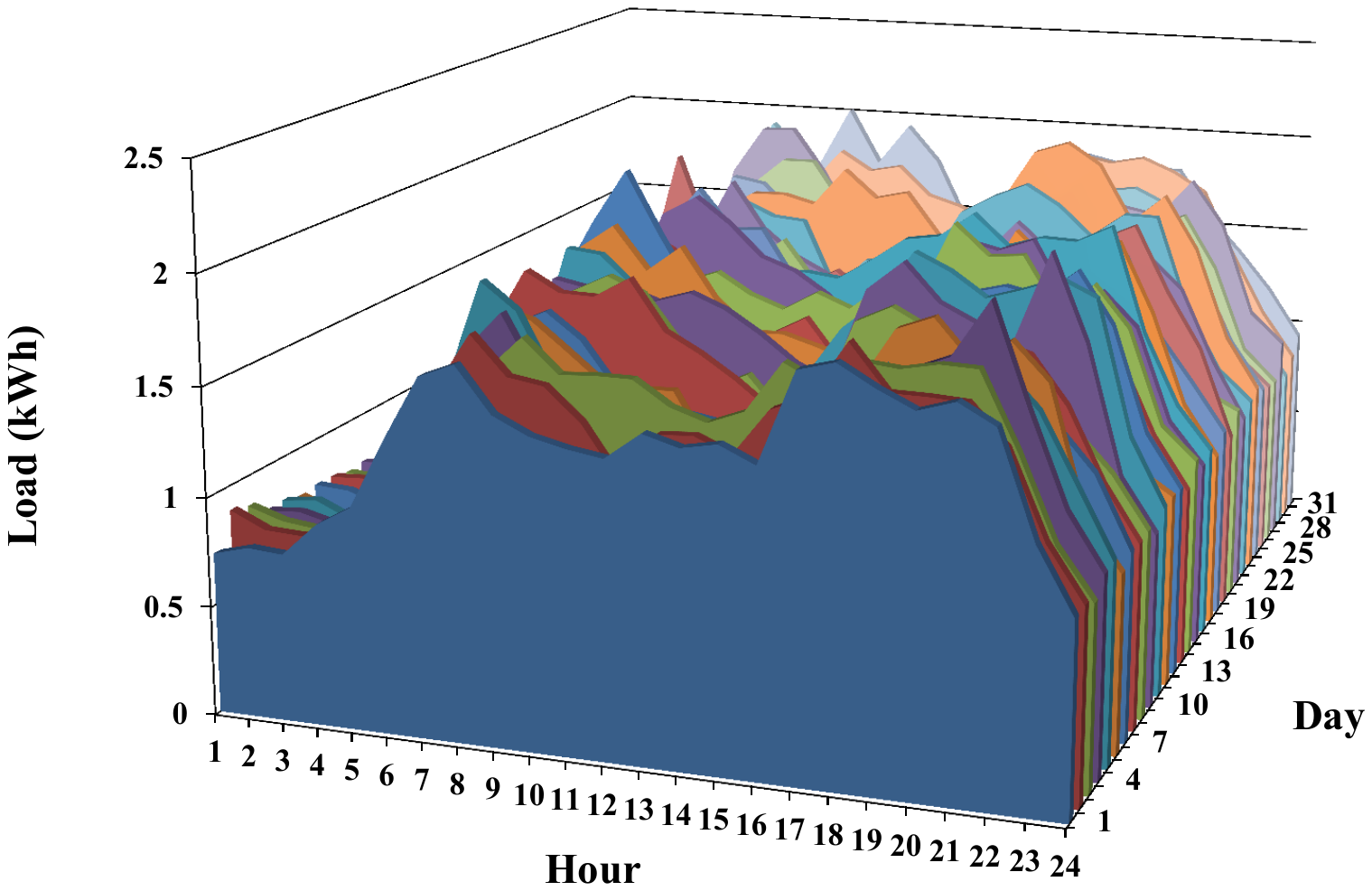}
\caption{Divide timeline into $31$ pieces by $24$ hours and reposition the pieces in parallel}\label{fig_portraintPlot1}
\end{center}
\vspace{-0.1in}
\end{figure}

\begin{figure}[!ht]
\begin{center}
\includegraphics[width=2.5in,height=1.4in]{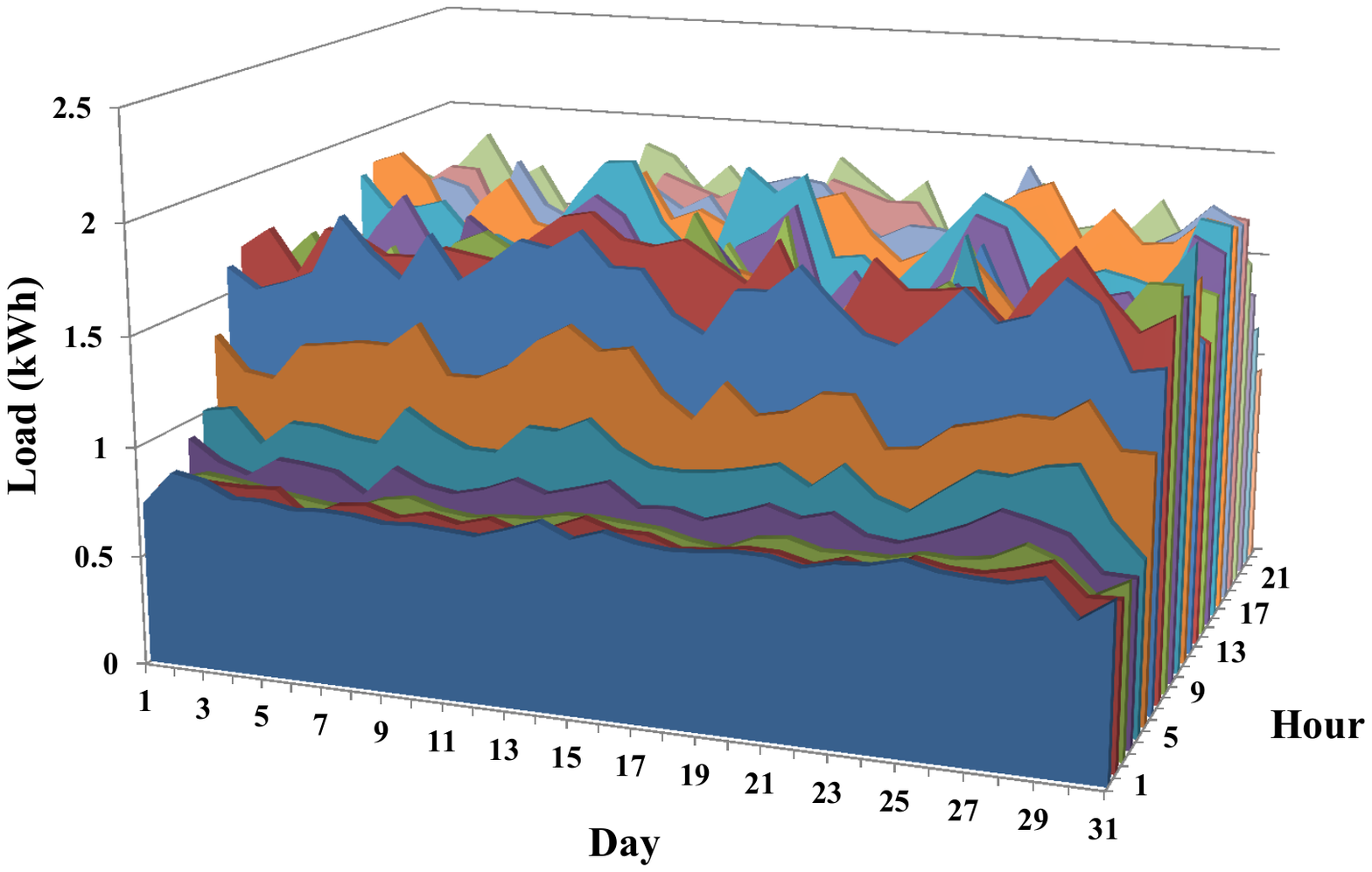}
\caption{Switch the view to portrait}\label{fig_portraintPlot2}
\end{center}
\vspace{-0.1in}
\end{figure}

\subsection{Characteristic Vector of Portrait Data}

Intuitively, a portrait dataset should include values with a very small variation. There are many ways to model this phenomenon. In this paper, we use the following characteristic vector to describe the portrait data.

\begin{definition}  The \textbf{characteristic vector} of portrait data $p_i$ is defined as 
\begin{equation}
e_i := [\theta_i, {M}_i],
\end{equation}
where $\theta_i$ and ${M}_i$ represent the median and the median absolute deviation (MAD) of values in $p_i$, respectively.
\end{definition} 

Since the data may be contaminated by outliers, we use the \emph{median} and \emph{MAD} instead of \emph{mean} and \emph{standard deviation} to represent central tendency and statistical dispersion of a portrait dataset, respectively. The \emph{median} and \emph{MAD} are more robust measures~\cite{huber2011robust}.

For the data in Fig.~\ref{fig_portraintPlot2}, the characteristic vectors of portrait data of the first $10$ hours are summarized in Table~\ref{tab:charvector}. The last column of the table shows that the \emph{MAD} value of landscape data is significantly higher. The results indicate that each portrait dataset is much more stable than the landscape data.\newline

\begin{table}[!ht]
	\caption{The characteristic vectors of portrait data of the first $10$ hours in Fig.~\ref{fig_portraintPlot2}, compared with landscape data (unit: kWh)}
	\centering
		\begin{tabular}{m{0.1cm} m{0.3cm} m{0.3cm} m{0.3cm} m{0.3cm} m{0.3cm} m{0.3cm} m{0.3cm} m{0.3cm} m{0.3cm} m{0.4cm}|m{0.55cm}}
			\hline
		 $hr.$ &  $1$ & $2$ & $3$ & $4$ & $5$ & $6$ &  $7$ & $8$ & $9$ &  $10$ & 1-10\\
			\hline
	 $\theta$ &  $0.79$ & $0.78$ & $0.77$ & $0.84$ & $0.99$ & $1.30$ & $1.69$ & $1.76$ & $1.69$ & $1.60$ & $1.14$\\
	 $M.$ &  $\textbf{0.04}$ & $\textbf{0.01}$ & $\textbf{0.02}$ & $\textbf{0.04}$ & $\textbf{0.05}$ & $\textbf{0.08}$ & $\textbf{0.09}$ & $\textbf{0.07}$ & $\textbf{0.09}$ & $\textbf{0.11}$ & $\textbf{0.42}$\\
			\hline
		\end{tabular}
	\label{tab:charvector}
\end{table}

\begin{definition} The \textbf{similarity} of two portrait datasets $p_i$,$p_j$ with characteristic vectors $e_i$, $e_j$, respectively, is defined as   
\begin{equation}
s_{ij} := \begin{cases} \infty, & \mbox{if } e_i=e_j  \\ 
          1/{\lVert e_i - e_j \rVert}_{2} & otherwise. \end{cases} \label{sij}
\end{equation}
\end{definition}

\nop{Applying different threshold values on the similarity, heuristic algorithms can be developed to obtain different number of portrait data slices. We can also build virtual portrait datasets for ease of data analysis, which will be introduced in detail in Section~\ref{sec:bvPortrait}.}

We can develop heuristic algorithms to merge multiple portrait datasets with a high similarity into a virtual portrait dataset, which will be introduced in detail in Section~\ref{sec:bvPortrait}.

\subsection{Properties of Portrait Data}

Compared to landscape data, the portrait data has the following desirable properties:
\begin{itemize}
\item The data values within the same portrait are similar and can be processed together even if they are separated in the original time domain. 

\item The data values within the same portrait dataset can be captured with a simple model, for which numerous fast data cleansing methods can be applied. In contrast, landscape data is normally nonlinear and requires complicated non-linear regression-based methods. 

\item With portrait data, users' behavioral patterns in different time periods can be modeled. In Fig.~\ref{fig_portraintPlot2}, the energy consumption in the first hour of each day is quite stable and low, but the situation for the seventh hour is quite different. As such, a data point with small deviation in the first slice should be captured as an outlier, but may be considered regular in the seventh slice. In this way, we can improve the accuracy of outlier detection.
\end{itemize}

It is worth noting that portrait data is \textit{not} just a data visualization trick. It is helpful to design efficient algorithms for load curve data analysis and cleansing. In specific, due to the stability in each portrait data, it is much easier to build simple models to capture the outliers. In addition, by combining similar portrait slices into one virtual slice, we can build \emph{virtual portrait} dataset, which further speeds up data processing.

\section{Construction of Portrait Data}\label{sec:bvPortrait}

\subsection{Analysis of Periodic Pattern in Landscape Data}\label{sec:fftAnalysis}
 
In order to \emph{automatically} construct portrait data, we need to find out the time period of landscape load curve data. In our daily life, the energy consumption of different houses or buildings is usually periodic, either hourly, daily or weekly. When the volume of landscape data is big, an automatic method is needed to quickly discover the periodic behaviour hidden in the landscape data. In this paper, Fourier analysis~\cite{bloomfield2004fourier} is used for this purpose.

According to Fourier transform, given a non-sinusoidal periodic function:
\begin{equation}
f(t)=f(t+kT), k=0, 1, 2, \cdots,
\end{equation}
if in one cycle of the periodic function there are finite maximum and minimum values, as well as the finite number of first category discontinuous points\footnote{A discontinuous point $x$ is called the first category discontinuous point where there exist finite limits from the left $f(x-0)$  and from the right $f(x+0)$ for $f$.}, the function can be unfolded into a convergent Fourier Series, \emph{i.e.},

\begin{equation}\label{eq_fft}
f(t) = A_{0} + A_{1}cos(\Omega t + \psi_{1}) + \sum_{k=2}^{\infty}A_{k}cos(k \Omega t + \psi_{k}),
\end{equation}
where $A_{0}$ is called the \emph{constant component} and $A_{1}cos(\Omega t + \psi_{1})$ the \emph{fundamental component}. The frequency of the fundamental component discloses the lowest frequency in the original function $f(t)$, which can be used to construct the portrait data. 

\nop{According to Fourier transform, given a non-sinusoidal periodic function, if in one cycle of the periodic function there are finite maximum and minimum values, as well as the finite number of first category discontinuous points\footnote{A discontinuous point $x$ is called the first category discontinuous point where there exist finite limits from the left $f(x-0)$ and from the right $f(x+0)$ for $f$.}, it can be unfolded into a convergent Fourier Series. In the Fourier Series, the second formula coefficient, named \emph{fundamental component}, discloses the period of the function.}

Since the load curve data is discrete, we should use another form for Fourier transform, Discrete Fourier Transform (DFT), to convert a finite list of equally-spaced samples of a function into the list of coefficients of a finite combination of complex sinusoids, ordered by their frequencies. To speed up the process, Fast Fourier Transform (FFT) is adopted, which is developed upon DFT and works much faster. 

In practice, the sampling interval for residential energy consumption on the utility side is normally $15$ minutes~\cite{chen2011activity}. Considering the periodic pattern in load curve is relatively longer, such as one day (24 hours), the sampling rate is high enough to acquire the time period of load curve data.

\nop{In our application context, sampling rate for energy consumption is high enough (normally, 15 minutes in practice~\cite{chen2011activity}) to recover the periodic information in the load curve data. Some recent products~\cite{CurrentCost2013} can sample and transmit data at a rate as high as per sample every 10 seconds. Thus, the acquisition of time period can be guaranteed.}

\subsection{Construction of Basic and Virtual Portrait Data}

\nop{Having identified the periods in landscape data, now the problem is how many slices of portrait data should be split. }
The next step is to decide how many slices of portrait data should be split.

One solution is to split the load curve data with the span of sampling interval, which will result in portrait data with the highest resolution. However, since the sample rate may be significantly high, such kind of splitting may result in too many portrait data slices. Considering that the characteristic vectors of some portrait datasets are similar, we merge them together into a virtual portrait dataset to speed up data cleansing. Therefore, a two-phase method is developed.

\subsubsection{Build Basic Portrait Datasets} The portrait datasets obtained in this phase are called \emph{basic portrait dataset} (BPD). With FFT, the fundamental period of load curve data can be obtained. Assuming that there are $r$ samples in one period, we can obtain $r$ basic portrait datasets $\{p_0, p_1,\cdots, p_r\}$. Accordingly, we can calculate the characteristic vector of each basic portrait dataset, denoted by $\{e_0, e_1,\cdots, e_r\}$, respectively. 

\subsubsection{Build Virtual Portrait Datasets} We merge multiple basic portrait datasets with similar characteristic vectors into one \emph{virtual portrait dataset} (VPD). As such, A clustering algorithm is needed to partition the basic portrait datasets into exclusive clusters such that within each cluster, the pairwise similarity of basic portrait datasets is no less than a given threshold. In order to accelerate data analysis, it is desirable to minimize the total number of clusters. This optimization problem can be formulated as follow:
\begin{itemize}
\item \textbf{Input:} Basic portrait data $\{p_1,p_2,\cdots,p_r\}$ and their corresponding characteristic vectors $\{e_1,e_2,\cdots,e_r\}$. A given threshold $s_0$ on similarity.
\item \textbf{Output:} Minimum number of virtual portrait datasets, denoted by $\{P_1,P_2,\cdots,P_n\}$ such that within each virtual portrait dataset, the pairwise similarity of the basic portrait datasets is no less than $s_0$. 
\end{itemize}

\begin{equation}\label{eqt:optPortraitData}
\begin{aligned}
 \underset{\{P_1,P_2,\cdots,P_n\}}{\text{\textbf{  minimize}}} &\quad \quad \quad  n \\
 \text{\textbf{subject to}} && \\
& \bigcup P_i=\{p_1,p_2,\cdots,p_r\}\\
& P_i\bigcap P_j=\varnothing, i\neq j\\
& P_i=\left\{\{p_{l_1},p_{l_2},\cdots,p_{l_m}\} \ |\ s_{l_sl_t} \geq s_0 \right\}\\
& 1\leq i,j\leq n; 1\leq m\leq r; 1\leq l_s,l_t\leq m
\end{aligned}
\end{equation}

In order to solve the above problem, a graph $G=(V,E)$ is constructed, where each vertex $v\in V$ represents a BPD and an edge is built between two vertices if their similarity is no less than $s_0$. It is easy to see that the problem is equivalent to the \emph{clique-covering} problem, which has been proven to be NP-complete~\cite{lund1994hardness}. Hence, a \emph{greedy clique-covering algorithm} is adopted to obtain an approximate solution. Algorithm~\ref{alg:greedyCovering} shows the pseudo code of the greedy clique-covering problem. 

\begin{algorithm}[!ht]
\caption{Greedy Clique-Covering Algorithm}\label{alg:greedyCovering}
\begin{algorithmic}[1]
\Require Graph $G=(V,E)$
\Ensure A set of cliques $P$ that completely cover $G$

\State Initialize uncovered vertex set $V' \leftarrow V$
\State Initialize number of cliques, $n=0$

	\While{$V' \neq \Phi$}
	\State $n= n+1$
	\State Find $v\in V'$ with the highest node degree
	\State Find $U\subseteq V'$ with $u\in U$ and $(u,v)\in E$
	\State Construct subgraph $G'=(U,D)$ where $U$ includes all vertices adjacent to $v$, and $D$ includes the associated links
	\State Initialize clique $P_n=\{v\}$

		\For {each $w \in U$} 
			\If {$w$ is adjacent to all vertices in $P_n$}
			\State $P_n\leftarrow P_n \cup \{w\}$
			\EndIf
			
		\EndFor

	\State $V'\leftarrow V'\backslash P_n$
	\EndWhile \\

\Return $P_1,P_2,\cdots,P_n$

\end{algorithmic}
\end{algorithm}

The basic idea of the algorithm is to find cliques that cover more vertices that have not been clustered. Heuristically, the vertices with larger degrees may have a better chance of resulting in a smaller number of cliques. Thus, the search starts from the vertex with the largest degree, until all vertices are covered. Obviously, a resulted cluster is a clique in the graph. Since each vertex represents a BPD, a clique represents a VPD.

\begin{lemma}
The computational complexity of Algorithm~\ref{alg:greedyCovering} is lower bounded by $O(r\log{r})$ and upper bounded by $O(r^{2}\log{r})$, where $r$ is the number of basic portrait datasets. 
\end{lemma}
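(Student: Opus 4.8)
The plan is to decompose the running time into two independent factors: the number of iterations of the outer \textbf{while} loop, which equals the number $n$ of cliques produced, and the cost incurred inside a single iteration. Since every iteration removes at least the chosen vertex $v$ from $V'$, and removes all of $V'$ at once when $G$ is a single clique, the iteration count satisfies $1 \le n \le r$. Multiplying a per-iteration cost by these two extremes yields the best-case (lower) and worst-case (upper) bounds in the statement, so the heart of the argument is to show that one iteration costs $\Theta(r\log r)$.

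For the per-iteration cost I would account for each step of the loop body separately. Computing the degree of every vertex in the current $V'$ and selecting the one of highest degree is naturally done by sorting the vertices by degree, at cost $O(r\log r)$; because the deletions of the previous iteration change the degrees, this sort is repeated each time the loop runs, and it is this step that supplies the $\log r$ factor. The remaining operations — extracting the neighbour set $U$ of $v$, building the induced subgraph $G'=(U,D)$, and sweeping the inner \textbf{for} loop that greedily enlarges $P_n$ — touch at most $O(r)$ vertices, and with an adjacency-matrix representation enumerating $U$ costs $O(r)$; the cost of the adjacency tests inside the \textbf{for} loop I defer to the discussion below. Granting that these tests do not dominate, the sort governs a single iteration at $O(r\log r)$, and even in the most favourable case the loop body cannot run faster than the $\Omega(r\log r)$ needed to order the degrees.

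Combining the two factors finishes the proof. In the best case $G$ is a complete graph: the first iteration places all $r$ vertices into $P_1$, the loop terminates with $n=1$, and the total cost is exactly that of the single sort, namely $\Omega(r\log r)$, giving the lower bound. In the worst case $G$ is edgeless (all BPDs pairwise dissimilar), so each clique is a singleton, $n=r$, and summing the $O(r\log r)$ per-iteration cost over $r$ iterations yields $O(r^2\log r)$, the upper bound.

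The step I expect to be the main obstacle — and the one I would treat most carefully — is bounding the inner adjacency-checking loop. A naive implementation tests each candidate $w\in U$ against \emph{all} current members of $P_n$, which can cost $\sum_{w}|P_n|=O(r^2)$ within one iteration and, over $r$ iterations, would overshoot the stated $O(r^2\log r)$. I would argue that this danger does not materialise: either the adjacency bookkeeping is maintained incrementally so each $w$ is resolved in amortised $O(1)$ time, or one observes that iterations producing large cliques are few (their vertices are deleted and never revisited), so that the total adjacency work summed over all iterations is $O(r^2)$ and is absorbed by the $O(r^2\log r)$ coming from the repeated sorts. Making this charging argument precise, rather than the elementary iteration-count and sorting estimates, is where the real work lies.
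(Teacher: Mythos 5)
Your decomposition (iteration count between $1$ and $r$, times a per-iteration cost) matches the skeleton of the paper's argument, but the way you bound the per-iteration cost is genuinely different from the paper's, and it is exactly where your proof breaks. The paper never analyzes sorting or adjacency tests at all: it observes that the vertices are the characteristic vectors $e_i=[\theta_i,M_i]$, i.e.\ points in the plane, so $G$ is a 2D geometric graph whose cliques correspond to rectangle-intersection cliques, and it invokes the result of Imai and Asano that the largest clique of a rectangle intersection graph can be found in $O(r\log r)$ time. That geometric shortcut is what makes an $O(r\log r)$ iteration possible; your argument discards the 2D structure entirely and works on an abstract graph, and for abstract graphs the claim fails.

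Concretely, the gap is in your lower-bound (best-case) step. On a complete graph you assert ``the total cost is exactly that of the single sort, namely $\Omega(r\log r)$,'' but this ignores two $\Theta(r^2)$ costs that your own implementation must pay in that very iteration: (i) computing the degrees that you then sort --- on a dense graph this alone touches $\Theta(r^2)$ adjacency entries; and (ii) the inner \textbf{for} loop, which must verify each candidate $w$ against the growing clique $P_1$. Your two proposed rescues do not help here. The counter-based ``amortised $O(1)$'' bookkeeping still costs $O(\deg(x))$ every time a vertex $x$ joins $P_1$, which sums to $\Theta(r^2)$ when $P_1=V$; and your charging argument across iterations bounds the \emph{total} adjacency work by $O(r^2)$, which is fine for the $O(r^2\log r)$ upper bound but is useless for the best case, since $r^2$ already exceeds $r\log r$. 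In fact no purely graph-theoretic implementation can certify an $r$-vertex clique with $o(r^2)$ adjacency probes, so the $O(r\log r)$ best case is simply unreachable in your model. The paper's proof avoids this precisely because, for points in the plane with a distance-threshold adjacency rule, a clique can be certified geometrically (all points lying in a small region) via a sweep in $O(r\log r)$ time, without any pairwise checks. To repair your proof you would have to reintroduce that geometric ingredient --- at which point you are back to the paper's argument.
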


\begin{proof}
Since the similarity of two basic portrait datasets is calculated with their characteristic vectors consisting of two values, the graph $G$ in Algorithm~\ref{alg:greedyCovering} is actually a geometric graph in the 2D plane. Any clique resulted from Algorithm~\ref{alg:greedyCovering} can be bounded by some rectangle region in the 2D plane. According to~\cite{imai1983finding},  the largest clique of a rectangle intersection graph can be found with computational complexity no more than $O(r\log{r})$. Since in Algorithm~\ref{alg:greedyCovering} the number of iterations in finding cliques could range from $1$ to $r$, the computational complexity ranges from $O(r\log{r})$ to $O(r^{2}\log{r})$.
\end{proof}

\section{Load Curve Data Cleansing}\label{sec:cleansing}

In this section, we show portrait data can help load curve data cleansing. Load curve data cleansing involves two phases: (1) detecting outliers and (2) fixing the missing or aberrant values in the dataset. 

\subsection{Detection of Outliers}
Formally, for a given distribution $F$, the outlier detection problem is to identify those values that lie in a so-called \emph{outlier region} defined below:

\begin{definition} For any confidence coefficient $\alpha, 0<\alpha<1$, the \textbf{$\alpha$-outlier region} of $F$ distribution with parameter vector $\Theta$ can be defined by
\begin{equation}\label{eq:outlierRegion}
out(\alpha,\Theta)=\{x: x<Q_{\frac{\alpha}{2}}(\Theta) \textbf{ or } x>Q_{1-\frac{\alpha}{2}}(\Theta)\},
\end{equation}
\noindent where $Q_{q}(\Theta)$ is the $q$ quantile of function $F(\Theta)$.
\end{definition}

Since we usually do not have \textit{apriori} knowledge on the distribution of portrait data, various possible cases should be considered. Note that performing statistical test to find out the distribution of load curve data does not work well when the load data is polluted. We need to consider several potential cases for outlier detection. 
 
\subsubsection{Case 1: Outlier Detection for Normal Distributed Data}

The normal distribution can be adopted as an empirical distribution, which has been proved to be effective in general situations~\cite{ben2005outlier}. 

According to Equation~(\ref{eq:outlierRegion}), for a normal distribution $N(\mu,\sigma^2)$, its  $\alpha$-\emph{outlier region} is

\begin{equation}
out(\alpha,(\mu,\sigma^2))=\{x: |x-\mu|>\Phi_{1-\frac{\alpha}{2}}\sigma\},
\end{equation}
\noindent where $\Phi_{q}$ is the $q$ quantile of $N(0,1)$. For normal distributed portrait datasets $P_i,i=1,2,\cdots$, we claim that a value $x$ is an $\alpha$-outlier in $P_i$, if $x\in out(\alpha,(\widehat{\mu_i},\widehat{\sigma_i}^2))$, where $\widehat{\mu_i}$ and $\widehat{\sigma_i}$ are unbiased estimators of $\mu_i$ and $\sigma_i$, respectively. Since the data may be contaminated by outliers, we use the \emph{median} and \emph{MAD} instead of \emph{mean} and \emph{standard deviation} in our later detection.  

\nop{For our virtual portrait datasets, $P_i,i=1,2,\cdots$, assuming the data points follow a normal distribution $N(\mu_i,\sigma_i^2)$, we can use the above model for outlier detection. In its \emph{characteristic vector} $e_i=[\widehat{\mu_i},\widehat{\sigma_i}]$, since $\widehat{\mu_i}$ and $\widehat{\sigma_i}$ can be regarded as the unbiased estimators of $\mu_i$ and $\sigma_i$, the original distribution can be proximately replaced by $N(\widehat{\mu_i},\widehat{\sigma_i}^2)$, especially when the size of portrait datasets is large. }

\subsubsection{Case 2: Outlier Detection for Gamma Distributed Data}

It has been shown that the aggregated residential load at a given time instant follows the gamma distribution~\cite{carpaneto2008probabilistic, cagni2004characterisation}. \nop{In addition, for the unpolluted portrait data in our paper, we perform statistical tests, including the Kolmogorov-Smirnov (KS) test and chi-square test, and find that it also fits the gamma distribution well.} In this light, the gamma distribution is also a good candidate distribution for outlier detection. 

According to Equation~(\ref{eq:outlierRegion}), for a gamma distribution with shape parameter $\beta$ and scale parameter $\gamma$, $G(\beta,\gamma)$, its  $\alpha$-\emph{outlier region} is
\begin{equation}
out(\alpha,(\beta,\gamma))=\{x: x<F_{\frac{\alpha}{2}}^{-1}(\beta,\gamma) \textbf{ or } x>F_{1-\frac{\alpha}{2}}^{-1}(\beta,\gamma) \}, \label{GammaDetection}
\end{equation}
\noindent where $F^{-1}$ is the inverse cumulative distribution function of $G(\beta,\gamma)$, and $F_{q}^{-1}(\beta,\gamma)$ is the $q$ quantile of $G(\beta,\gamma)$.

If we assume that virtual portrait datasets, $P_i,i=1,2,\cdots$ follow a gamma distribution $G(\beta,\gamma)$, we can use (\ref{GammaDetection}) for outlier detection. In this case, $(\widehat{\mu_i}^2/\widehat{\sigma_i}^2)$ and $(\widehat{\sigma_i}^2/\widehat{\mu_i})$ are the moment estimators of $\beta$ and $\gamma$, respectively. 
\nop{When the number of points in the portrait dataset is large enough, the distribution can be proximately represented by $G(\widehat{\mu_i}^2/\widehat{\sigma_i}^2,\widehat{\sigma_i}^2/\widehat{\mu_i})$. } Due to the same reason as in Case 1, we use the \emph{median} and \emph{MAD} instead of \emph{mean} and \emph{standard deviation} in our later detection. 

\nop{Therefore, for gamma distributed portrait data, we claim that a sample data $x$ is an $\alpha$-outlier in $P_i$, if $x\in out(\alpha,\widehat{\mu_i}^2/\widehat{\sigma_i}^2,\widehat{\sigma_i}^2/\widehat{\mu_i})$, where $(\widehat{\mu_i}^2/\widehat{\sigma_i}^2)$ and $(\widehat{\sigma_i}^2/\widehat{\mu_i})$ are moment estimators of $\beta$ and $\gamma$, respectively.}

\subsubsection{Case 3: Outlier Detection for Small-Size Portrait Data}

In the above outlier detection strategies, the size of portrait datasets is assumed to be large. Otherwise, the parameter estimation may be inaccurate. When the size of samples is small, Tukey \textit{et al.}~\cite{tukey1977exploratory} introduce a graphical procedure called \emph{boxplot} to summarize univariate data.

The boxplot uses median and lower and upper quartiles (defined as the $25$-th and $75$-th percentiles). If the lower quartile is $Q_1$ and the upper quartile is $Q_3$, then the difference $(Q_3 - Q_1)$ is called interquartile range or $IQR$. After arranging data in order, the ones falling in the following outlier region are identified as outliers.

\begin{equation}
out(\rho,(Q_1,Q_3))=\{x: x < Q_1-\rho\cdot IQR \textbf{ or } x > Q_3+\rho\cdot IQR \},
\end{equation}
\noindent where $\rho$ is an index of significance, and the outliers are said to be ``mild'' when $\rho=1.5$ and ``extreme'' when $\rho=3$.

Overall, the above three cases cover most situations a user may meet in portrait load data cleansing. Nevertheless, other strategies can also be chosen as long as they give more precise model of the portrait data.

\subsection{Replacing Missing Data or Aberrant Data} \label{sec:imputation}
We mainly focus on outlier detection in this paper for two reasons. 
\begin{enumerate}
	\item Imputation of missing data can be easily done after we obtain the characteristic vector of portrait data, \emph{e.g.}, we could replace a missing value with the median of the corresponding portrait dataset. 
	\item Replacing aberrant values requires human interaction, since it needs the user to further confirm whether or not an outlier is a corrupted value. The user can either (1) replace the outlier with an acceptable value of the corresponding dataset, \emph{e.g.}, the mean value for Case 1 and Case 3 and the value of $\beta/\gamma$ for Case 2, or (2) leave the outlier unchanged, as long as the cause of creating the outlier can be explained, such as the stimulation of holidays/special events.  
\end{enumerate}

Note that data imputation is normally carried out after outlier detection. It is common to initially set the missing data to default values of zeros, which are likely to be outliers and then are replaced with acceptable values. This strategy has been used in~\cite{chen2011activity}. Nevertheless, the default value can be altered by the user according to different scenarios. For instance, if there exist valid load values close to zero, we can set the default value for missing data to a very large value, so that missing data can be identified easily as outliers. 

\nop{Note that data imputation is normally carried out after outlier detection. It is common to initially set the missing data to default values of zeros, which are likely to be outliers and then are replaced with acceptable values. This strategy has been used in~\cite{chen2011activity}. Nevertheless, the default value can be altered by the user according to the overall energy consumption, \textit{e.g.}, if there exist small load curve data close to zero, the default value of missing data can be set to some tremendous values to avoid ``misdiagnosis''.}

If the user has explicitly learned the cause of the aberrant or missing data and found any above or other replacing approach that fits the needs well, the approach can be incorporated with our outlier detection approach and works automatically.

\section{Handling Non-Stationary Landscape Data}\label{sec:bvLandscape}

The construction and cleansing of portrait data are based on the assumption that the landscape data are stationary along the timeline. Informally, a stationary time series has a well-defined mean around which it can fluctuate with constant finite variance. This assumption may be true during a short time period, while in a significantly long time period, such as one year, the load curve shows seasonal patterns and is usually not stationary. For the one year load values shown in Fig.~\ref{fig_timelinePlot}, they are not stationary along the whole timeline. Consequently, the portrait data cleansing strategies in Section~\ref{sec:cleansing} may not work well. 

To deal with this problem, a pre-processing method is proposed, based on two observations: (1) the load curve data exhibits periodicity in a small time scale (\textit{e.g.}, one month) and (2) the fundamental period of load curve data (\textit{e.g.}, one day) in different small time windows (\textit{e.g.}, January and June) is (nearly) the same. Both observations will be further validated in our late experimental evaluation.

Within a small time-scale (\textit{e.g.}, several days to one month), the fundamental period of the landscape data can be obtained via FFT. We first divide the whole time with the length of the fundamental period, and use the data within each time period as the basic building block. For the landscape data in the $i$-th period, which is denoted as $l_{i}$, we define its \emph{characteristic vector} $e_{i} := [\theta_i, {M}_i]$, where $\theta_{i}$ and ${M}_{i}$ represent \emph{median} and \emph{median absolute deviation (MAD)} of the values in $l_{i}$, respectively. Thus, similar to portrait data, the \emph{similarity} of the landscape data in two different period, $l_{i}$ and $l_{j}$, can be defined as $s_{ij} := 1/\lVert e_{i} - e_{j} \rVert$, or $\infty$ if $\lVert e_{i} - e_{j} \rVert =0$. Here we slightly abuse the notation by using the $e_i$ and $s_{ij}$ to denote the characteristic vector and similarity, respectively, for both landscape data and portrait data. Their meaning, however, is easy to figure out from the context. 

For the landscape data of different periods that have similar characteristic vectors, we merge them into one dataset, which is called a \emph{virtual landscape dataset} (VLD). If the whole landscape data consists of $n$ (non-overlapping) periods, the problem of constructing its VLDs can be formally defined as follow.

\begin{itemize}
\item \textbf{Input:} Landscape data $\{ l_{1}, l_{2}, \cdots, l_{n}\}$ and their characteristic vectors $\{ e_{1}, e_{2}, \cdots, e_{n}\}$. A given similarity threshold $s_{0}$.
\item \textbf{Output:} Minimum number of VLDs $\{ L_{1}, L_{2}, \cdots, L_{m}\}$, $m \ll {n}$.
\end{itemize}

Note that the above problem is exactly the same with Problem~(\ref{eqt:optPortraitData}). Thus, Algorithm~\ref{alg:greedyCovering} can be re-used to construct VLDs. Since all data points in each of the VLDs have similar properties, they are stationary and meet the requirement for portrait data construction and cleansing. We can then further build corresponding portrait data for each VLD.

\section{Implementation and Experimental Evaluation}\label{sec:evaluation}

In this section, the real-world trace data shown in Fig.~\ref{fig_timelinePlot} is used to construct virtual portrait datasets. We implement the multiple strategies introduced above to detect outliers, and perform numerous experiments to evaluate the performance. 

So far, there is limited literature on data cleansing applications to smart grid, and one significant contribution was in~\cite{chen2010automated}, in which a non-parametric regression method based on B-spline smoothing was proposed to help users identify outliers. For comparison purpose, we implement the B-spline smoothing method and compare it with our own method.

\subsection{Fundamental Period}

By applying FFT on the landscape data, we got the frequency spectrum of landscape data, and the frequency of the second peak corresponds to the fundamental frequency. Its reciprocal is the fundamental period of the landscape data. After calculation, the fundamental frequency of the landscape data in Fig.~\ref{fig_timelinePlot} is $1.1574 \times 10^{-5}$, which precisely results in a period of $24$ hours ($86400$ seconds).

In addition, a sensitivity experiment is made with $1000$ tests on the data. In each of the test, a random time period longer than $1$ month but shorter than $1$ year was chosen. According to the results, the mean value of the identified periods is $23.9984$ hours with variance of $1.9952 \times 10^{-4}$. Therefore, we can conclude that the accuracy of identified fundamental period is not sensitive to the time period and the starting time of the samples. 

\nop{This experiment also indicates that, by applying FFT, we can get the period of load curve data accurately by using a small number of samples.}

\subsection{The Optimal Threshold Value}

By applying Algorithm~\ref{alg:greedyCovering}, a number of virtual (portrait/landscape) datasets can be built for a given threshold value on the similarity measure. By changing the threshold value from small to large, we can get a series number of virtual datasets. We are thus faced with the following question: what is the optimal threshold value?

In order to answer the above question, \emph{mean distance} is defined to estimate the ``quality" of virtual datasets (\textit{i.e.}, whether or not two virtual datasets are clearly separate). For $n$ virtual datasets with corresponding characteristic vectors $e_1, e_2, \ldots, e_n$, the mean distance is defined as:
\begin{equation}\label{eq:clarity}
d_{n} := {\sum_{i=1}^{n-1}{\sum_{j=i+1}^{n}{s_{ij}}}}/{\binom{n}{2}},
\end{equation}
\noindent where $s_{ij}$ is defined by Equation~(\ref{sij}). Obviously, with the same number of virtual datasets, the larger the $d$, the clearer the separation among the virtual datasets.  

By changing the threshold value on the similarity measure, we can obtain different numbers of virtual datasets. Applying the \emph{ELBOW criterion}\footnote{The concept of VPD is in principle the same as clustering. The ELBOW criterion means that we should choose a number of clusters so that adding another cluster would not model the data much better.}~\cite{Elbow2013}, we can get the optimal number of virtual datasets. The optimal threshold on the similarity measure is thus the one that leads to this number of virtual datasets.

\subsection{Performance Metrics}

In outlier detection, four statistical indicators are widely used: (1) true positive ($TP$), the number of points that are identified correctly as outliers; (2) false positive ($FP$), the number of points that are normal but are identified as outliers; (3) true negative ($TN$), the number of points that are normal and are not identified as outliers; (4) false negative ($FN$), the number of points that are outliers but are not identified. Using $TP, FP, TN$ and $FN$, we evaluate the following four broadly-used performance metrics: accuracy, precision, recall, and F-measure. Accuracy is the degree of closeness of measurement to the actual situation as a whole; precision is the percentage of correctly detected corrupted regions with regard to the total detected regions; recall is the percentage of correctly detected regions with regard to pre-labeled corrupted regions; the F-measure is a harmonic mean of precision and recall, \textit{i.e.},
\begin{equation}
\textit{F-measure}=\frac{2\cdot Precision\cdot Recall}{Precision+Recall}.
\end{equation}

Furthermore, running time (R.T.) and memory usage (M.U.) of program are used to measure the time and space consumption of different methods, respectively. We implement them in R and test them with $32$-bit Windows OS with $3.4 GHz$ CPU and $4 GB$ RAM.

The real-world data shown in Fig.~\ref{fig_timelinePlot} is used for the evaluation. Since this dataset is relatively clean, we ask three students to distort the data with ``falsification'', \textit{i.e.}, they are asked to arbitrarily modify the load curve data within the range of $[0,\infty)$. Five percent of the samples are changed and labelled.

\nop{
\begin{table*}[!ht]
	\caption{Performance on small-scale data: virtual portrait data cleansing \textsl{vs} B-spline smoothing}
	\centering
		\begin{tabular}{|l|c|c|c||c|c|c|c|c|c|}
			\hline
			& \multicolumn{3}{c||}{\textbf{Virtual Portrait Data Cleansing Strategies}} & \multicolumn{6}{c|}{\textbf{B-spline Smoothing}}\\
      \cline{2-10}
			 & Normal-based & Gamma-based & IQR-based & df = $148$ & df = $188$ & df = $228$ & df = $258$ & df = $288$ & df = $318$ \\
			\hline
			$Accuracy$ & $0.9878$ & $0.9865$ & $0.9879$ & $0.9582$ & $0.9716$ & $0.9715$ & $0.9724$ & $0.9779$ & $0.9748$ \\
			\hline
			$Precision$ & $0.8857$ & $0.9375$ & $0.9118$ & $0.8182$ & $0.7917$ & $0.7308$ & $0.5405$ & $0.5333$ & $0.4151$ \\
			\hline
			$Recall$ & $0.7750$ & $0.7500$ & $0.7750$ & $0.2250$ & $0.4750$ & $0.4750$ & $0.5000$ & $0.6000$ & $0.5500$ \\
			\hline
			$F-measure$ & $0.8267$ & $0.8333$ & $0.8378$ & $0.3529$ & $0.5938$ & $0.5758$ & $0.5195$ & $0.5647$ & $0.4731$ \\
			\hline
			\hline
			$Running Time(s)$ & $0.036$ & $0.036$ & $0.037$ & $0.040$ & $0.051$ & $0.075$ & $0.089$ & $0.101$ & $0.118$ \\
			\hline
			$Memory Usage(MB)$ & $0.067$ & $0.067$ & $0.067$ & $2.933	$ & $3.605$ & $4.327$ & $4.890$ & $5.422$ & $5.980$ \\
			\hline
		\end{tabular}
	\label{tab:resultComparison}
\end{table*}
}

\begin{table*}[!ht]
	\caption{Performance on small-scale data: virtual portrait data cleansing \textsl{vs} B-spline smoothing}
	\centering
		\begin{tabular}{|l|c|c|c|c||c|c|c|c|c|}
			\hline
			& \multicolumn{4}{c||}{\textbf{Virtual Portrait Data Cleansing Strategies}} & \multicolumn{5}{c|}{\textbf{B-spline Smoothing}}\\
      \cline{2-10}
			 & Normal-based & Gamma-based & IQR-based & B-spline(df = $68$) & df = $148$ & df = $188$ & df = $228$ & df = $258$  & df = $318$ \\
			\hline
			$Accuracy$ & $0.9878$ & $0.9865$ & $0.9879$ & $0.9823$ & $0.9582$ & $0.9716$ & $0.9715$ & $0.9724$  & $0.9748$ \\
			\hline
			$Precision$ & $0.8857$ & $0.9375$ & $0.9118$ & $0.7500$ & $0.8182$ & $0.7917$ & $0.7308$ & $0.5405$  & $0.4151$ \\
			\hline
			$Recall$ & $0.7750$ & $0.7500$ & $0.7750$ & $0.6750$ & $0.2250$ & $0.4750$ & $0.4750$ & $0.5000$  & $0.5500$ \\
			\hline
			\textit{F-measure} & $0.8267$ & $0.8333$ & $0.8378$ & $0.7105$ & $0.3529$ & $0.5938$ & $0.5758$ & $0.5195$  & $0.4731$ \\
			\hline
			\hline
			$R.T.(second)$ & $0.036$ & $0.036$ & $0.037$ & $0.021$ & $0.040$ & $0.051$ & $0.075$ & $0.089$  & $0.118$ \\
			\hline
			$M.U.(MB)$ & $0.067$ & $0.067$ & $0.067$ & $1.427$ & $2.933	$ & $3.605$ & $4.327$ & $4.890$  & $5.980$ \\
			\hline
		\end{tabular}
	\label{tab:resultComparison}
\end{table*}

\begin{figure*}[!ht]
\begin{center}
\includegraphics[width=1.0\textwidth]{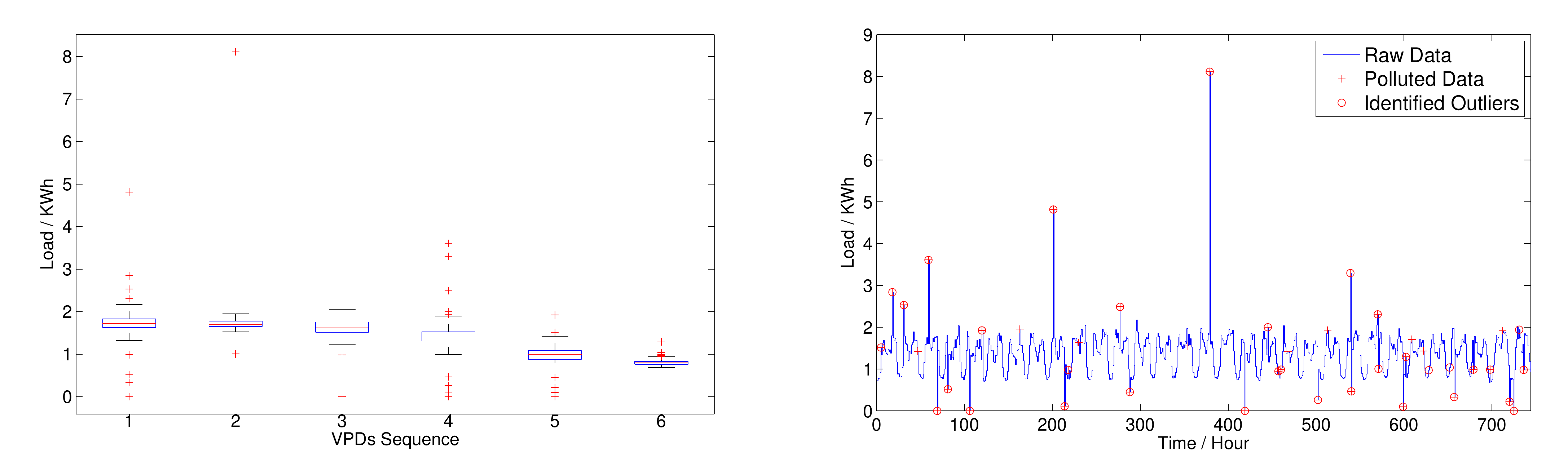}
\caption{Result of outlier detection from IQR-based virtual portrait data cleansing}\label{fig_resultBoxplot}
\end{center}
\end{figure*}

\begin{table*}[!ht]
	\caption{Performance on large-scale data: virtual portrait data cleansing \textsl{vs} B-spline smoothing}
	\centering
		\begin{tabular}{|l|c|c|c||c|c|c||c|c|c||c|c|}
			\hline
			& \multicolumn{3}{c||}{\textbf{5 Virtual Landscape Datasets}} & \multicolumn{3}{c||}{\textbf{7 Virtual Landscape Datasets}} & \multicolumn{3}{c||}{\textbf{10 Virtual Landscape Datasets}} & \multicolumn{2}{c|}{\textbf{B-spline Smoothing}}\\
      \cline{2-12}
			 & N-based & G-based & I-based & N-based & G-based & I-based & N-based & G-based & I-based & df = $2815$ & df $>$ $2815$ \\
			\hline
			$Accuracy$ & $0.9931$ & $0.9950$ & $0.9927$ &   $0.9939$ & $0.9950$ & $0.9930$ &   $0.9936$ & $0.9951$ & $0.9947$ &   $0.9792$ & $-$\\
			\hline
			$Precision$ & $0.6154$ & $0.7080$ & $0.7143$ &   $0.5820$ & $0.7080$ & $0.7590$ &   $0.6939$ & $0.6864$ & $0.7091$ &   $0.3378$ & $-$\\
			\hline
			$Recall$ & $0.5161$ & $0.6452$ & $0.4839$ &   $0.5726$ & $0.6452$ & $0.5081$ &   $0.5484$ & $0.6532$ & $0.6290$ &   $0.3620$ & $-$\\
			\hline
			\textit{F-measure} & $0.5614$ & $0.6751$ & $0.5769$ &   $0.5772$ & $0.6751$ & $0.6087$ &   $0.6126$ & $0.6694$ & $0.6667$ &   $0.3495$ & $-$\\
			\hline
			\hline
			$R.T(second)$ &  $2.74$ & $2.67$ & $2.41$ &   $3.29$ & $3.30$ & $3.44$ &   $4.69$ & $4.18$ & $4.35$ &   $74.44$ & $-$\\
			\hline
			$M.U(MB)$ & $0.42$ & $0.42$ & $0.42$ &   $0.42$ & $0.42$ & $0.42$ &   $0.42$ & $0.42$ & $0.42$ &   $595.26$ & $-$\\
			\hline
		\end{tabular}
	\label{tab:resultComparison2}
\end{table*}

\begin{figure*}[!ht]
\begin{center}
\includegraphics[width=1.0\textwidth]{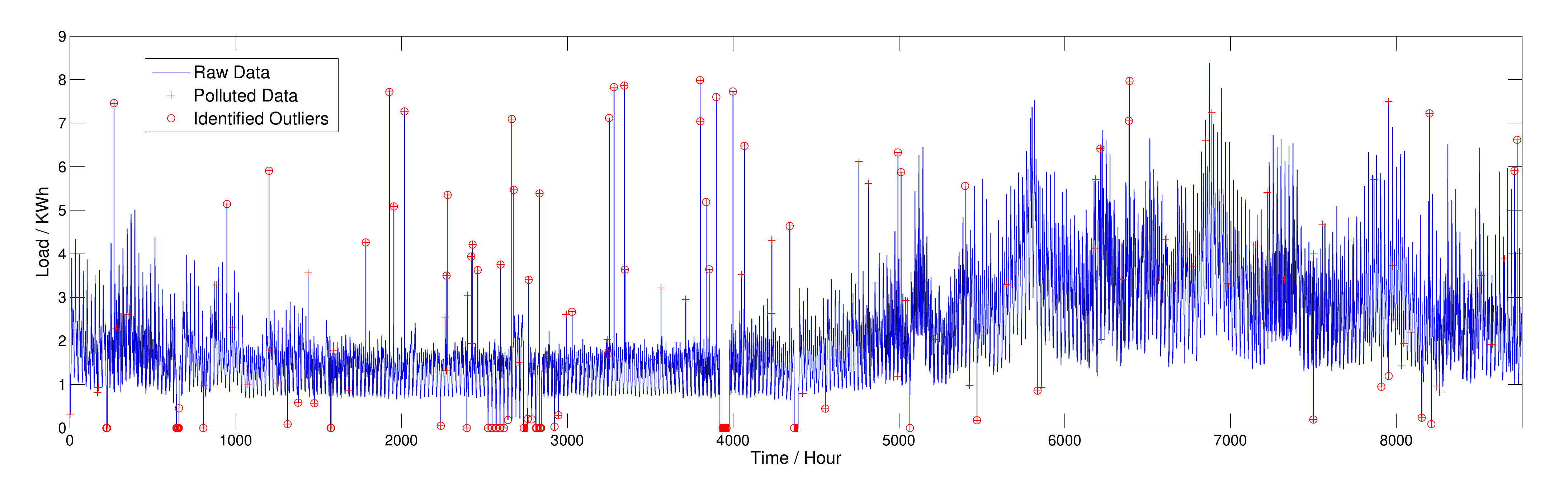}
\caption{Result of outlier detection from gamma distribution based virtual portrait data cleansing ($7$ VLDs)}\label{fig_resultVLD7GB}
\end{center}
\end{figure*}

In our tests, the confidence coefficient is set as $\alpha=0.05$, which results in a confidence interval of $95\%$. Besides, in the IQR-based method, $\rho$ is set as $1.5$, and in the B-spline smoothing method, the degree of freedom ($df$) is treated as a variable and trained when smoothing the load curve.

\subsection{Results from Small-Scale Data}
The one-month data from $01/08/2006$ to $31/08/2006$ in Fig.~\ref{fig_timelinePlot} are firstly used for evaluations. Since these data are stationary, virtual portrait data construction strategy is applied directly. Six virtual portrait datasets are resulted based on the optimal threshold value. \nop{, as shown in Fig.~\ref{fig_6VPD}.

\begin{figure}[!ht]
\begin{center}
\includegraphics[width=3.0in,height=1.9in]{6VPD.pdf}
\caption{Visualization of virtual portrait datasets}\label{fig_6VPD}
\end{center}
\end{figure}
}

In addition to the strategies introduced in Section~\ref{sec:cleansing}, the B-spline smoothing method is also applied for each virtual portrait dataset. Performance metrics are computed with the outcomes from different methods, and the final results are summarized in Table~\ref{tab:resultComparison}. Furthermore, an outcome from IQR-based portrait data cleansing is shown in Fig.~\ref{fig_resultBoxplot}. 

From the above results, we can find that our virtual portrait data cleansing strategies perform much better than B-spline smoothing. For this dataset, both gamma distribution based and IQR-based detection methods perform better than the normal distribution based one. It is interesting to see that applying B-spline smoothing method to virtual portrait data does not bring clear improvement (refer to the results in $5$-th column of Table~\ref{tab:resultComparison}). This implies that using simpler methods on portrait data can achieve good performance already. It is unnecessary to use complex approaches such as B-spline smoothing on portrait data.

Furthermore, we can see that the virtual portrait data cleansing runs faster and uses much less memory than B-spline smoothing. In fact, most time and memory spent in our strategies are on the construction of  virtual portrait datasets, and the overhead of data cleansing over portrait data is negligible. B-spline smoothing, however, spent over $99\%$ of the running time and memory on the calculation of basis functions, which are used to fit the landscape load curve data.

\subsection{Results from Large-Scale Non-Stationary Data}

In practice, the size of load curve data is usually large and covers a time period as long as several years. Therefore, we also test the performance of our method on the one-year data shown in Fig.~\ref{fig_timelinePlot}. 

Note that the landscape data are not always stationary during the whole time window, so we pre-process the data with the method introduced in Section~\ref{sec:bvLandscape}. For comparison, three solutions with 5, 7 and 10 VLDs are provided for tests and evaluations (7 VLDs are resulted from the optimal threshold value). Then for each VLD of each solution, following the same operations for small-scale dataset, we construct its virtual portrait datasets and apply portrait data cleansing strategies to identify outliers.

The results are summarized in Table~\ref{tab:resultComparison2}, and an outcome from 7 VLDs and gamma distribution based portrait data cleansing is shown in Fig.~\ref{fig_resultVLD7GB}.

From the results in Table~\ref{tab:resultComparison2}, we can see that with non-stationary landscape data, virtual portrait data cleansing strategies are still effective and perform well. According to \emph{F-measure}, gamma distribution based cleansing strategy does better than the other two, indicating that it achieves a good balance between precision and recall and has a better overall performance; IQR-based cleansing does better at \emph{precision}, indicating that this strategy performs well at exactness of outlier detection. \nop{Therefore, for large-scale data cleansing, gamma distribution based method should be preferred to normal distribution based one.}

\nop{Normal distribution based data cleansing does better at \emph{precision}, indicating that this strategy performs well at exactness of outlier detection; gamma distribution based cleansing strategy has a better performance at \emph{recall}, showing that it does well at completeness of outlier detection; both IQR-based and gamma distribution based cleansing methods have good results on \emph{F-measure}, indicating that they achieve a good balance between precision and recall. }

In contrast, outlier detection with B-spline smoothing performs poorly. With the largest degree of freedom that the computation allows\footnote{B-spline smoothing with degree of freedom larger than $2815$ is beyond the capability of our desktop computers.}, the \emph{precision}, \emph{recall} and \emph{F-measure} of outlier detection are all below $50\%$. To be worse, the overhead on running time and memory consumption is significantly higher than our method. 

As shown in Fig.~\ref{fig_resultVLD7GB}, our method does not identify many polluted data from time $5000$ till the end. This is because those artificially polluted data are at a comparable level in value as the nearby regular load data. These ``outliers'' are similar to regular values and cannot be effectively detected with any method. 
\nop{
which makes most of them ``hidden outliers'' and 

This, however, does not imply that our method performs badly. The reason why our method fails here is because those artificially polluted data are at a comparable level in value as the surrounded regular load data, which makes most of them ``hidden outliers'' and hard to be captured even with the naked eyes.
}
\subsection{Discussion: Why Does B-spline Smoothing not Perform Well on Load Curve Data?}

\subsubsection{A Local View}

To further investigate why B-spline smoothing does not perform well in load curve data cleansing, we first study the performance in a smaller, local scale. We analyze two situations shown in Fig.~\ref{fig_bsfit148} and Fig.~\ref{fig_bsfit258}, where B-spline smoothing either under-fits or over-fits the load curve data.

From Fig.~\ref{fig_bsfit148}, we can see that the four labeled polluted data were not identified due to the under-fitted regression of B-spline. To alleviate the problem, we may increase the degree of freedom ($df$), but doing so may result in over-fitting. As shown in Fig.~\ref{fig_bsfit258}, in order to fit some outliers (the red dots in the figure), the fitted curve actually deviates from regular data points (the green dots in the figure), which ends up with bad performance in outlier detection. During the process from under-fitting to over-fitting, there must exist a $df$ value which results in the best performance, but finding the best $df$ value is time consuming. 

\begin{figure}[!ht]
\begin{center}
\includegraphics[width=0.5\textwidth,height=0.95in]{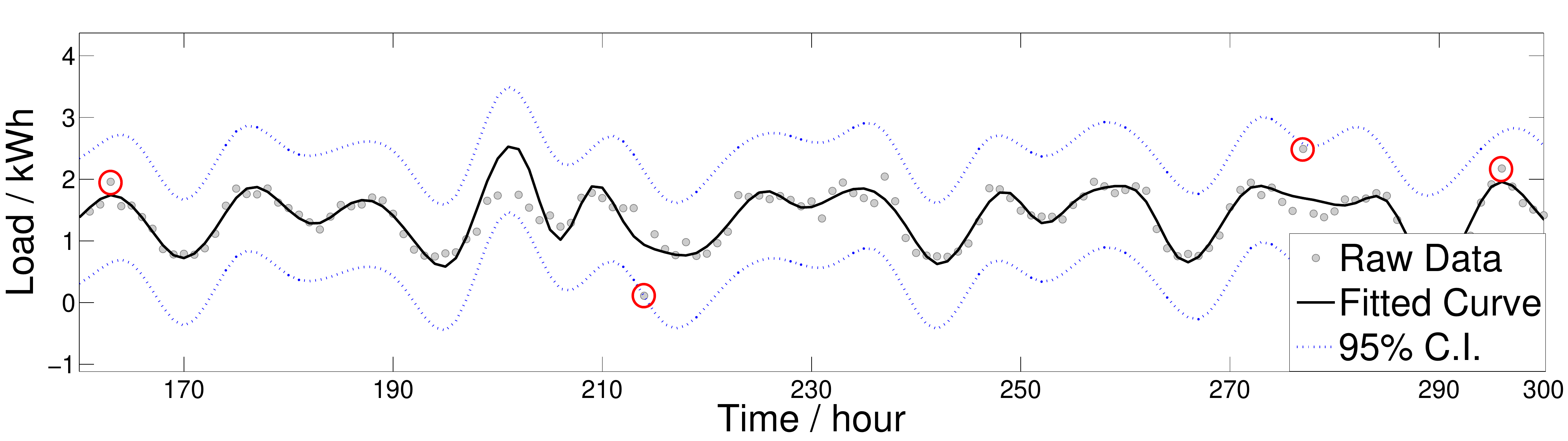}
\caption{Under-fitted B-spline smoothing ($df=100$)}\label{fig_bsfit148}
\end{center}
\vspace{-0.1in}
\end{figure}

\begin{figure}[!ht]
\begin{center}
\includegraphics[width=0.5\textwidth,height=0.95in]{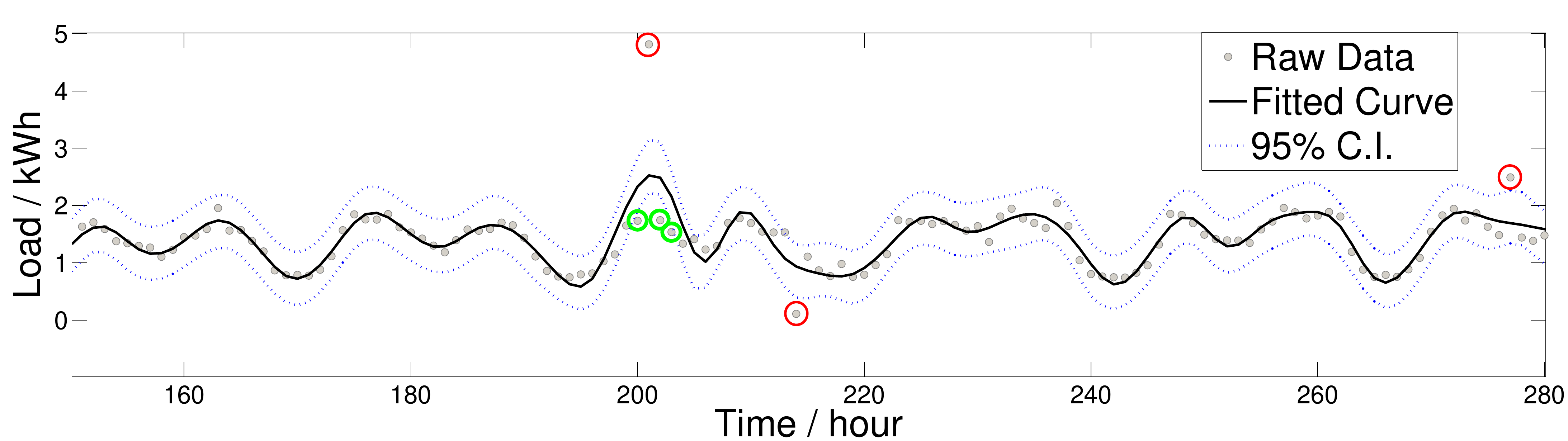}
\caption{Over-fitted B-spline smoothing ($df=200$)}\label{fig_bsfit258}
\end{center}
\vspace{-0.1in}
\end{figure}

The above phenomenon is caused by the inherent problem in regression method, as it treats each data point in the same way and tries to reduce the total estimation error. This may not work well because load curve data at different times follow different statistical features. Using the \emph{portrait} data, in contrast, we can divide data into different groups according to their attributes, and analyze each group separately. Thus ``pathological'' data values may infect landscape data on a large time window but has only limited impact on portrait data. This is the essential point where B-spline smoothing performs poor while virtual portrait data cleansing does better. 

\subsubsection{A Global View}

\begin{figure*}[!ht]
\begin{center}
\includegraphics[width=1.0\textwidth]{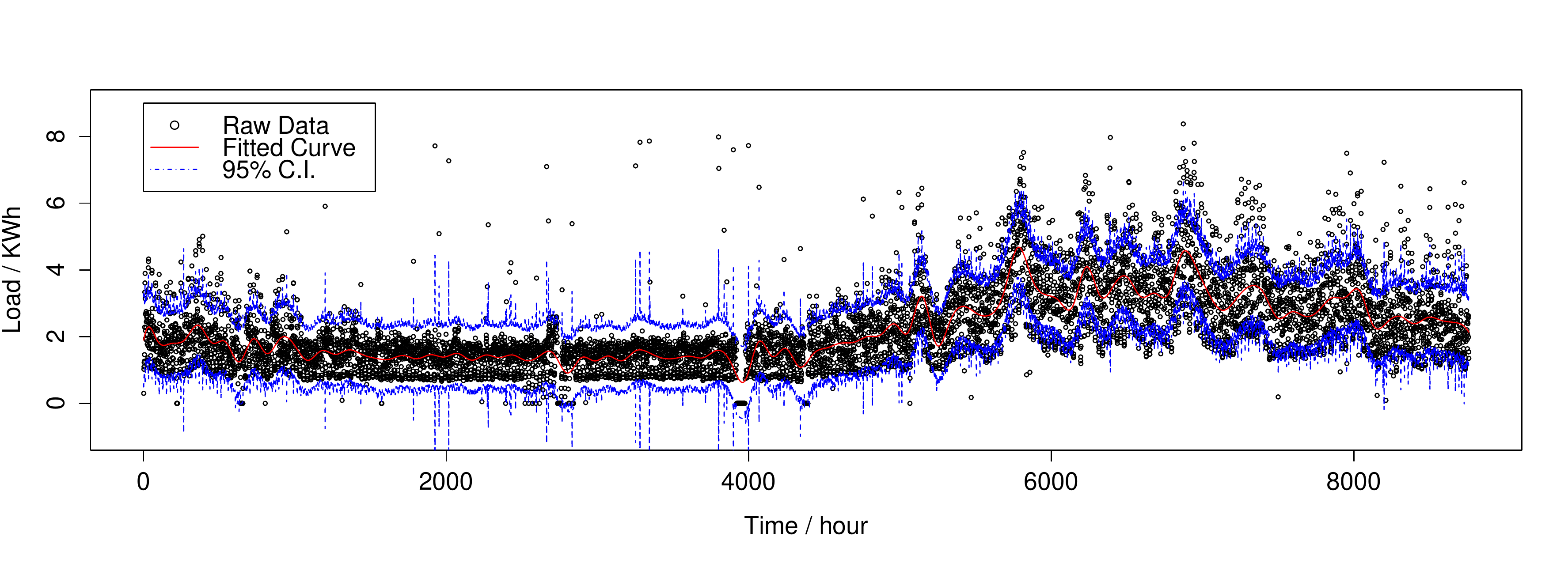}
\caption{Results of B-spline smoothing for large-scale data ($df=100$)}\label{fig_outcomeBS2}
\end{center}
\vspace{-0.1in}
\end{figure*}

An outcome from B-spline smoothing with $df = 100$ is shown in Fig.~\ref{fig_outcomeBS2}, in which we can have a global view of its performance. 

\nop{From $3920h$ to $3975h$ in the load curve, the data are lost and are treated as zeros during B-spline smoothing, as refer to Section~\ref{sec:imputation}. We can find that most missing data are not identified, which is caused by an apparent curvature trend to fit the filled data. We can imagine that, even we replace the consecutively polluted data with other values, such curvature trend is inevitable. This exposes another drawback of regression-based outlier detection methods: they cannot deal with \emph{consecutively} polluted data. In contrast, our portrait data cleansing strategies do not have such a problem. Since all the landscape data will be separated into different portrait data, the consecutively polluted data will be evaluated and handled differently. As a result, one polluted data will not affect nearby ones.}

From $3920h$ to $3975h$ in the load curve, the data are lost and are treated as zeros during B-spline smoothing. We can find that most missing data are not identified. This is caused by an apparent curvature trend to fit the filled data. Even if we replace the missing data with other constants, such a curvature trend is inevitable. This exposes another drawback of regression-based outlier detection methods: they cannot deal with \emph{consecutively} polluted data. In contrast, our portrait data cleansing strategies do not have such a problem. Since all the landscape data will be separated into different portrait data, the consecutively polluted data will be evaluated and handled differently. As a result, one polluted data will not affect nearby ones.

In some special time periods such as holidays, the load values may be consecutively higher or lower in the landscape data. This scenario is similar to the above case. With regression-based outlier detection, there will be an inevitable curvature trend to fit the irregular data, while with our strategies, such data values are separated into different virtual portrait datasets and can be detected with a high possibility. \nop{Furthermore, for a relative small-scale dataset which is more likely stationary time series, irregular data can be identified more easily. }

\section{Conclusion}\label{sec:conclusion}
A new approach was presented to organizing and analyzing load curve data. This approach was based on the inherent periodic patterns in the load curve data and re-organized the data into virtual portrait datasets that could be captured with simple models. Compared to existing regression-based analysis, portrait data based approach significantly simplified many data analysis tasks such as outlier detection. In addition, with simple data pre-processing, our method could effectively handle large-scale non-stationary load curve data. We tested our approach with real-world trace data, including a small-scale stationary dataset and a large-scale non-stationary dataset. The experimental results demonstrated that our approach was much more effective and efficient than existing regression-based methods over both small-scale and large-scale load curve data.


%

\nop{
\appendices
\section{Proof of the First Zonklar Equation}
Appendix one text goes here.

\section{}
Appendix two text goes here.
}

\section*{Acknowledgment}
This research was supported by the Natural Sciences and Engineering Research Council of Canada, National Natural Science Foundation of China (No. 61373152), and Shanghai Committee of Science and Technology, China (No. 13ZR1417500).

\ifCLASSOPTIONcaptionsoff
  \newpage
\fi



%

\bibliographystyle{abbrv}
\bibliography{Reference}

%
\begin{IEEEbiographynophoto}{Guoming Tang}
received his B.S. and M.S. degree from National University of Defense Technology (NUDT), China. He is currently pursuing the Ph.D. degree in the Department of Computer Science, University of Victoria, BC, Canada. His research interests include data cleansing and data mining in smart grid.
\end{IEEEbiographynophoto}

\begin{IEEEbiographynophoto}{Kui Wu}
received the B.Sc. and the M.Sc. degrees in Computer Science from Wuhan University, China in 1990 and 1993,
respectively, and the Ph.D. degree in Computing Science from the University of Alberta, Canada, in 2002. He joined the Department of Computer Science at the University of Victoria, Canada in 2002 and is currently a Professor there. He is also currently an Adjunct Professor in Shanghai University of Electric Power. His research interests include smart grid, mobile and wireless networks, and network performance evaluation.
\end{IEEEbiographynophoto}

\begin{IEEEbiographynophoto}{Jingsheng Lei}
received the Ph.D in Computer Science from Xinjiang University, China. He is currently a Professor and the dean of School of Computer and Information Engineering, Shanghai University of Electric Power. His research interests include data mining, machine learning, and smart grid. 
\end{IEEEbiographynophoto}

\begin{IEEEbiographynophoto}{Zhongqin Bi}
received the Ph.D. degree from East China Normal University in 2009. He is currently an Associate Professor in School of Computer and Information Engineering, Shanghai University of Electric Power. His research interests include smart grid, data quality control, and cloud computing. 
\end{IEEEbiographynophoto}

\begin{IEEEbiographynophoto}{Jiuyang Tang}
received the Ph.D. degree from National University of Defense Technology (NUDT). He is currently an Associate Professor at Institute of Information System and Management of NUDT. His current research interests include big data analysis, P2P, and ubiquitous networks. 
\end{IEEEbiographynophoto}




\end{document}